\documentclass[journal,onecolumn]{IEEEtran}
\IEEEoverridecommandlockouts

\usepackage{cite}
\usepackage{amsmath,amssymb,amsfonts}
\usepackage{algorithmic}
\usepackage{graphicx}
\usepackage{textcomp}
\usepackage{xcolor}
\usepackage{mathtools}
\usepackage{bm}
\usepackage{booktabs}
\usepackage[left=2.5cm,right=2.5cm,top=2cm]{geometry}

\newtheorem{theorem}{Theorem}[section]
\newtheorem{lemma}[theorem]{Lemma}
\newtheorem{proposition}[theorem]{Proposition}

\newtheorem{definition}[theorem]{Definition}

\DeclareMathOperator{\Ent}{H}

\def\BibTeX{{\rm B\kern-.05em{\sc i\kern-.025em b}\kern-.08em
    T\kern-.1667em\lower.7ex\hbox{E}\kern-.125emX}}

\begin{document}

\title{Sharper upper bounds for \(q\)-ary\\ and constant-weight \(B_2\) codes}

\author{\IEEEauthorblockN{Stefano Della Fiore}\\
\IEEEauthorblockA{\textit{Department of Information Engineering} \\
\textit{University of Brescia}\\
Brescia, Italy \\
stefano.dellafiore@unibs.it}
}

\maketitle

\begin{abstract}
We derive refined entropy upper bounds for \(q\)-ary \(B_2\) codes by exploiting the Fourier structure of the i.i.d.\ difference distribution \(D=X-Y\). Since the pmf of \(D\) is an autocorrelation, its Fourier series is a nonnegative trigonometric polynomial of degree at most \(q-1\). This leads to a natural convex relaxation over candidate difference distributions, equivalently expressible through an infinite family of positive semidefinite Toeplitz constraints. The resulting formulation admits a simple Gram interpretation and yields certified upper bounds through truncated semidefinite programs. Combined with the prefix--suffix method, this gives improved asymptotic rate upper bounds for \(q\)-ary \(B_2\) codes; in particular, for \(q\in\{9,10,11,12,13\}\) the resulting values improve on the best bounds known in the literature.

We also study binary constant-weight \(B_2\) codes. Extending the distance-distribution method of Cohen, Litsyn, and Z\'emor to the constant-weight setting, and combining it with Litsyn's asymptotic linear-programming bound for constant-weight codes, we derive a new upper bound on the constant-weight \(B_2\) rate.
\end{abstract}

\begin{IEEEkeywords}
$B_2$ codes, Fourier methods, Toeplitz matrices, constant-weight codes
\end{IEEEkeywords}

\section{Introduction}

Let \([0,q-1]=\{0,1,\dots,q-1\}\), and let \(C_n\subset[0,q-1]^n\) be a \(q\)-ary code
of length \(n\) with \(|C_n|=M\). We define the asymptotic rate of a family of codes
\(\{C_n\}\) by
\[
R=\limsup_{n\to\infty}\frac{1}{n}\log_q |C_n|,
\]
while logarithms without subscript are taken in base \(2\).

\begin{definition}
A \(q\)-ary code \(C_n=\{c_1,\dots,c_M\}\subset[0,q-1]^n\) is a \(B_2\) code if all sums
\[
c_i+c_j,\qquad 1\le i\le j\le M,
\]
computed over the reals, are distinct.
\end{definition}

For \(q=2\), this is equivalent to the usual notion of a \(\bar 2\)-separable code.
Upper bounds on the rate of \(q\)-ary \(B_2\) codes were obtained in
\cite{Lindstrom69,Lindstrom,CohenLitsynZemor,GuFanMiao,Wang}; see also the references therein.
The constant-weight binary setting was recently investigated in
\cite{SimaLiShomoronyMilenkovic}.

A standard route to asymptotic upper bounds is to relate the code cardinality to the entropy
of suitably chosen random variables associated with prefixes and suffixes of codewords.
For \(q\)-ary \(B_2\) codes, the key local object is the integer difference
\[
D=X-Y,
\]
where \(X\) and \(Y\) are i.i.d.\ \(q\)-ary random variables. The strength of the resulting
rate bound depends on how sharply one can upper bound \(\Ent(D)\). In our earlier work
\cite{DellaFioreDalai}, this entropy term was bounded through a coarse relaxation based only on
the collision constraint
\[
\mathbb{P}(X=Y)=\sum_{a=0}^{q-1}\mathbb{P}(X=a)^2\ge \frac{1}{q},
\]
which led to a closed-form upper bound on \(\Ent(D)\), and hence to an explicit asymptotic
rate bound.

The starting point of the present paper is that the difference distribution \(D=X-Y\) carries
substantially more structure than this single scalar constraint. Its probability mass function
is an autocorrelation, and therefore its Fourier series is a nonnegative trigonometric polynomial
of degree at most \(q-1\). Equivalently, its coefficients satisfy an infinite family of Toeplitz
positive semidefinite constraints. This observation leads to a natural convex relaxation over
candidate difference distributions. The relaxation is strong enough to improve the entropy bound,
yet remains numerically tractable through truncated semidefinite programs (SDPs). At the same time, the
Toeplitz formulation admits a direct Gram interpretation, which makes the origin of the
semidefinite constraints transparent.

This Fourier--Toeplitz viewpoint yields a refinement of the entropy step in the prefix--suffix
method and leads to a new upper bound on the asymptotic rate of \(q\)-ary \(B_2\) codes.
Numerically, the resulting bound improves on the best values previously available to us for
\(q\in\{9,10,11,12,13\}\).

The paper also contains a second result, concerning binary constant-weight \(B_2\) codes of
length \(n\) and weight \(w=\alpha n(1+o(1))\). In this setting the relevant tool is not the
prefix--suffix decomposition, but the distance-distribution method underlying the binary bound of
Cohen, Litsyn, and Z\'emor \cite{CohenLitsynZemor}. We show that this argument admits a
constant-weight refinement. The \(B_2\) property still provides the required combinatorial control
on the distance distribution, while the matching lower estimate is obtained from Litsyn's
asymptotic linear-programming bound for constant-weight codes \cite{Litsyn99}. This leads to a new
upper bound on the asymptotic rate \(r_{B_2}^{\mathrm{cw}}(\alpha)\), improving the recent bound of Sima, Li, Shomorony, and Milenkovi\'c \cite{SimaLiShomoronyMilenkovic}.

The paper is organized as follows. In Sections~II and III we develop the Fourier and Toeplitz
relaxations for the difference distribution and derive the corresponding entropy bound. In
Section~IV we combine this bound with the prefix--suffix argument to obtain a new upper bound for
\(q\)-ary \(B_2\) codes, and in Section~V we report the resulting numerical values. Section~VI is
devoted to the constant-weight binary setting, where we derive a Cohen--Litsyn-type upper bound
and compare it with the recent result of \cite{SimaLiShomoronyMilenkovic}.

\section{Fourier representation of the difference distribution}
\label{sec:fourier-diff}

Let $X$ be a random variable with values in $[0,q-1]$ and pmf $P=(p_0,\dots,p_{q-1})$,
and let $Y$ be an independent copy of $X$. Consider the integer difference
\[
D=X-Y.
\]
Then $D$ takes values in $\{-(q-1),\dots,q-1\}$ and its pmf
$r=(r_{-(q-1)},\dots,r_{q-1})$ is
\begin{equation}
r_k=\mathbb{P}(D=k)=\sum_{i=0}^{q-1} p_i p_{i+k},\qquad k=-(q-1),\dots,q-1,
\label{eq:autocorr}
\end{equation}
with the convention $p_j=0$ for $j\notin\{0,\dots,q-1\}$.

Introduce the characteristic function of $X$,
\[
\phi(\theta)=\mathbb{E}[e^{i\theta X}]
=\sum_{j=0}^{q-1} p_j e^{ij\theta},\qquad \theta\in[0,2\pi],
\]
and the characteristic function of $D$,
\begin{equation}
\psi_D(\theta)=\mathbb{E}[e^{i\theta(X-Y)}]=\phi(\theta)\overline{\phi(\theta)}=|\phi(\theta)|^2.
\label{eq:psiD}
\end{equation}
On the other hand, by Fourier series,
\begin{equation}
\psi_D(\theta)=\sum_{k=-(q-1)}^{q-1} r_k e^{ik\theta}=:R(\theta),
\label{eq:Rtheta}
\end{equation}
where $R(\theta)$ is a trigonometric polynomial of degree at most $q-1$ with real symmetric coefficients
$r_{-k}=r_k$. Combining \eqref{eq:psiD} and \eqref{eq:Rtheta},
\begin{equation}
R(\theta)=|\phi(\theta)|^2\ge 0,\qquad \theta\in[0,2\pi].
\label{eq:R-nonneg}
\end{equation}
Moreover, the coefficients satisfy the inversion formula
\begin{align*}
r_k
&=\frac{1}{2\pi}\int_0^{2\pi} R(\theta)e^{-ik\theta}\,d\theta \nonumber\\
&=\frac{1}{2\pi}\int_0^{2\pi} |\phi(\theta)|^2 e^{-ik\theta}\,d\theta,
\qquad |k|\le q-1.
\end{align*}

In particular,
\[
r_0=\mathbb{P}(X=Y)=\sum_{i=0}^{q-1}p_i^2
=\frac{1}{2\pi}\int_0^{2\pi} |\phi(\theta)|^2\,d\theta.
\]
By Cauchy--Schwarz,
\begin{equation}
r_0=\sum_{i=0}^{q-1}p_i^2\ge \frac{1}{q}.
\label{eq:r0-lb}
\end{equation}

\section{Entropy bounds via Fourier positivity and Toeplitz Gram matrices}
\label{sec:entropy-fourier}

Let $X,Y$ be i.i.d.\ on $[0,q-1]$ with pmf $P$ and let $D=X-Y$. Define
\[
h_q^\star:=\sup \Ent(D),
\]
where the supremum is over all $P$, and $\Ent$ denotes Shannon entropy (in bits), with the convention $0\log 0:=0$.

In \cite{DellaFioreDalai} we used the relaxed bound
\begin{equation}
\Ent(D)\le \bar h_q
:=\Ent\!\Big(\tfrac1q,\underbrace{\tfrac{1}{2q},\dots,\tfrac{1}{2q}}_{2q-2\ \text{times}}\Big)
=\frac{1}{q}\log q+\frac{q-1}{q}\log(2q),
\label{eq:hq-coarse}
\end{equation}
obtained by maximizing entropy under the sole constraint $r_0\ge 1/q$.

\subsection{Fourier-analytic relaxation}

\begin{definition}
For fixed $q\ge 2$ define $h_q^{\mathrm F}$ as the optimal value of
\begin{equation}
\begin{aligned}
\text{maximize}\quad &
-\sum_{k=-(q-1)}^{q-1} r_k\log r_k\\
\text{subject to}\quad
& r_k\ge 0,\quad \sum_{k=-(q-1)}^{q-1} r_k=1,\quad r_{-k}=r_k,\\
& r_0\ge \frac{1}{q},\\
& R(\theta):=\sum_{k=-(q-1)}^{q-1} r_k e^{ik\theta}\ge 0
\quad \text{for all }\theta\in[0,2\pi].
\end{aligned}
\label{eq:fourier-problem}
\end{equation}
\end{definition}

\begin{lemma}
\label{lem:hqF-upper}
For every $q\ge 2$,
\[
h_q^\star\le h_q^{\mathrm F}\le \bar h_q.
\]
\end{lemma}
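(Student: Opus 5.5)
Both inequalities follow from comparing feasible sets, so the plan is to prove two containments of constraint sets.

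For the first inequality $h_q^\star\le h_q^{\mathrm F}$, I will show that every pmf $P$ on $[0,q-1]$ produces a difference pmf $r$ that is feasible for \eqref{eq:fourier-problem}, with objective value exactly $\Ent(D)$. The checks are direct from Section~\ref{sec:fourier-diff}:
\begin{itemize}
\item Nonnegativity and normalization of $r_k$ hold because $r$ is the pmf of $D$.
\item Symmetry $r_{-k}=r_k$ follows from \eqref{eq:autocorr} by reindexing $i\mapsto i-k$. Equivalently, $D$ and $-D$ have the same law because $X,Y$ are i.i.d.
\item The collision constraint $r_0\ge 1/q$ is \eqref{eq:r0-lb}.
\item Fourier positivity $R(\theta)\ge0$ is \eqref{eq:R-nonneg}, since $R(\theta)=|\phi(\theta)|^2$.
\end{itemize}
Taking the supremum over $P$ then gives $h_q^\star\le h_q^{\mathrm F}$. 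Attainment of the supremum is not needed.

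For the second inequality $h_q^{\mathrm F}\le \bar h_q$, I drop the Fourier constraint. What remains is the problem of maximizing entropy over symmetric pmfs on $2q-1$ points with $r_0\ge 1/q$. I will show this relaxed problem has value $\bar h_q$, which is what \eqref{eq:hq-coarse} asserts; I reprove it here for completeness. The argument has two steps.
\begin{itemize}
\item \emph{Fix $r_0=t$.} The remaining mass $1-t$ is spread over $2q-2$ points. By concavity of entropy (grouping plus Jensen), the entropy is at most
\[
g(t)=-t\log t-(1-t)\log\frac{1-t}{2q-2},
\]
with equality for the uniform split. The uniform split is symmetric, so the symmetry constraint costs nothing.
\item \emph{Optimize over $t\in[1/q,1]$.} The derivative is $g'(t)=\log\frac{1-t}{t(2q-2)}$, which vanishes at $t=1/(2q-1)<1/q$. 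Hence $g$ is decreasing on $[1/q,1]$, and the maximum is at $t=1/q$. There the remaining mass $(q-1)/q$ is split into $2q-2$ equal pieces of $1/(2q)$, giving exactly $\Ent(1/q,1/(2q),\dots,1/(2q))=\bar h_q$.
\end{itemize}

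There is no real obstacle. The only points needing care are the monotonicity check that forces the optimum onto the boundary $r_0=1/q$, and the remark that the Fourier problem \eqref{eq:fourier-problem} has a compact feasible set. Compactness guarantees that $h_q^{\mathrm F}$ is a well-defined maximum, but it is not essential for the inequalities.
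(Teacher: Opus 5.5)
Your proof is correct and follows the paper's argument: every i.i.d.\ difference pmf is feasible for \eqref{eq:fourier-problem}, which gives the first inequality, and dropping the constraint $R(\theta)\ge 0$ gives the second. The only difference is that you re-derive the value $\bar h_q$ of the relaxed problem (fixing $r_0=t$ and showing $g$ is decreasing on $[1/q,1]$), whereas the paper simply quotes \eqref{eq:hq-coarse} from its earlier work.
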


\begin{IEEEproof}
Any i.i.d.\ difference pmf $r$ satisfies the probability and symmetry constraints, the collision constraint
$r_0\ge 1/q$ by \eqref{eq:r0-lb}, and Fourier nonnegativity $R(\theta)=|\phi(\theta)|^2\ge 0$ by
\eqref{eq:R-nonneg}. Hence it is feasible for \eqref{eq:fourier-problem}, giving $h_q^\star\le h_q^{\mathrm F}$.
Dropping $R(\theta)\ge 0$ enlarges the feasible set and yields \eqref{eq:hq-coarse}, so $h_q^{\mathrm F}\le \bar h_q$.
\end{IEEEproof}

\subsection{Toeplitz formulation and Gram interpretation}

Define $r_k=0$ for $|k|\ge q$. For $N\ge 1$ let the Toeplitz matrix $T^{(N)}(r)$ be
\[
T^{(N)}(r)_{ij}=r_{i-j},\qquad i,j\in\{0,\dots,N-1\}.
\]

Since $r$ comes from an autocorrelation as in \eqref{eq:autocorr}, then Toeplitz positive semidefiniteness is immediate from a Gram construction.
Let $p=(p_0,\dots,p_{q-1})$ and extend it to a length-$(N+q-1)$ vector
$\tilde p=(p_0,\dots,p_{q-1},0,\dots,0)$. For $i\in\{0,\dots,N-1\}$ let $\tilde p^{(i)}$ be the shift of $\tilde p$
by $i$ positions (padding with zeros). Then
\[
\langle \tilde p^{(i)},\tilde p^{(j)}\rangle
=\sum_{t} \tilde p_{t-i}\tilde p_{t-j}
=\sum_{u} p_u p_{u+(i-j)}=r_{i-j}.
\]
Hence $T^{(N)}(r)$ is the Gram matrix of the vectors $\{\tilde p^{(0)},\dots,\tilde p^{(N-1)}\}$ and therefore
$T^{(N)}(r)\succeq 0$. The relaxation keeps the necessary Gram condition while optimizing directly over $r$.

\begin{definition}[Toeplitz SDP formulation]
For fixed $q\ge 2$ define $h_q^{\mathrm{SDP}}$ as the optimal value of
\begin{equation}
\begin{aligned}
\text{maximize}\quad &
-\sum_{k=-(q-1)}^{q-1} r_k\log r_k\\
\text{subject to}\quad
& r_k\ge 0,\quad \sum_{k=-(q-1)}^{q-1} r_k=1,\quad r_{-k}=r_k,\\
& r_0\ge \frac{1}{q},\\
& T^{(N)}(r)\succeq 0\quad \text{for all }N\ge 1.
\end{aligned}
\label{eq:toeplitz-sdp}
\end{equation}
\end{definition}

The correspondence between nonnegative trigonometric polynomials and positive definite Toeplitz forms is classical, see for example \cite{CoverThomas,GrenanderSzego,Dumitrescu}. Hence we get the following equivalence.

\begin{proposition}
\label{prop:hqF-hqSDP}
For every $q\ge 2$ we have $h_q^{\mathrm F}=h_q^{\mathrm{SDP}}$.
\end{proposition}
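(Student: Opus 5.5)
The two optimization problems \eqref{eq:fourier-problem} and \eqref{eq:toeplitz-sdp} have the same objective and share all constraints except the last one. It therefore suffices to show that, for a real symmetric finitely supported sequence $r$ (with $r_k=0$ for $|k|\ge q$), the condition $R(\theta)\ge 0$ for all $\theta$ holds if and only if $T^{(N)}(r)\succeq 0$ for every $N\ge 1$. The two feasible sets then coincide, and so do the optimal values. I will prove the two directions separately, using only the Fourier inversion formula from Section~\ref{sec:fourier-diff}.

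\emph{Fourier nonnegativity implies Toeplitz PSD.} Fix $N$ and a vector $c\in\mathbb{C}^N$. Substituting $r_{i-j}=\frac{1}{2\pi}\int_0^{2\pi}R(\theta)e^{-i(i-j)\theta}\,d\theta$, which is valid for every integer index since $R$ has degree at most $q-1$ and $r_k=0$ for $|k|\ge q$, gives
\[
c^{*}T^{(N)}(r)c=\sum_{i,j}\overline{c_i}c_j r_{i-j}=\frac{1}{2\pi}\int_0^{2\pi}R(\theta)\Big|\sum_{j}c_je^{ij\theta}\Big|^2 d\theta .
\]
The right-hand side is nonnegative whenever $R\ge 0$, so $T^{(N)}(r)\succeq 0$. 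This is a routine calculation.

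\emph{Toeplitz PSD implies Fourier nonnegativity.} Fix $\theta_0$ and take the test vector $c_j=e^{-ij\theta_0}/\sqrt N$ for $j=0,\dots,N-1$. Then
\[
c^{*}T^{(N)}(r)c=\frac1N\sum_{i,j=0}^{N-1}r_{i-j}e^{i(i-j)\theta_0}=\sum_{|k|\le q-1}\Big(1-\frac{|k|}{N}\Big)r_ke^{ik\theta_0},
\]
which is the Fejér mean of $R$ at $\theta_0$. It is nonnegative by hypothesis. Letting $N\to\infty$, and using that there are only finitely many terms, each with weight $(1-|k|/N)\to 1$, yields $R(\theta_0)\ge 0$.

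This second direction is the only point that needs care. A single finite $N$ does not suffice; that is exactly why \eqref{eq:toeplitz-sdp} imposes the constraint for all $N$. The limiting Fejér argument handles this cleanly. Alternatively, one could invoke the classical Carathéodory--Toeplitz/Fejér--Riesz correspondence cited before the statement. Having shown the feasible sets coincide, I would conclude that $h_q^{\mathrm F}=h_q^{\mathrm{SDP}}$. Both suprema are attained, since the feasible set is compact and the entropy is continuous, but attainment is not needed for the equality.
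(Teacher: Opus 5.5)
Your proof is correct, but it differs from the paper, which gives no argument for this proposition beyond citing the classical correspondence between nonnegative trigonometric polynomials and positive semidefinite Toeplitz forms (Grenander--Szeg\H{o}, Dumitrescu). You instead prove that correspondence directly for finitely supported symmetric $r$. The forward direction comes from the integral representation $c^{*}T^{(N)}(r)c=\frac{1}{2\pi}\int_0^{2\pi}R(\theta)\,\bigl|\sum_j c_je^{ij\theta}\bigr|^2\,d\theta$, and the converse from the Fej\'er means generated by the test vectors $c_j=e^{-ij\theta_0}/\sqrt N$. This makes the result self-contained and shows explicitly why every $N$ is needed.

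The paper's Gram construction alone would not give your forward direction. It only certifies $T^{(N)}(r)\succeq 0$ for genuine autocorrelations of pmfs, not for an arbitrary feasible $r$ of \eqref{eq:fourier-problem}. What the classical route offers in exchange comes from Fej\'er--Riesz: $R(\theta)=|g(e^{i\theta})|^2$ with $g$ a real polynomial of degree at most $q-1$. Hence every feasible $r$ is the autocorrelation of some real, possibly signed, vector in $\mathbb{R}^q$. The Gram picture then extends verbatim, and the relaxation amounts to replacing $p\ge 0$ by the weaker condition $r\ge 0$. Your argument neither needs nor yields this structural description.

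One small precision: the identity $c^{*}T^{(N)}(r)c=\sum_{|k|\le q-1}(1-|k|/N)\,r_ke^{ik\theta_0}$ holds only for $N\ge q-1$. For smaller $N$ the terms with $|k|\ge N$ are absent rather than negatively weighted. This is harmless, since you only let $N\to\infty$.
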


\section{Improved bounds for $B_2$ codes}
\label{sec:B2}

We now show how the improved entropy bound $h_q^{\mathrm{F}}$ for
$D=X-Y$ enters the derivation of an upper bound on the rate
of $q$-ary $B_2$ codes. The proof below follows the same structure as in
\cite{DellaFioreDalai} and is self-contained.

Let $C_n$ be a $q$-ary $B_2$ code of length $n$ with $M=|C_n|$ codewords.
Split each codeword into a prefix of length $e$ and a suffix of length $f=n-e$,
and partition $C_n$ into classes $P_1,\dots,P_r$ according to the prefix.
Let $M_i=|P_i|$ and denote by $\ell_i\in[0,q-1]^e$ the common prefix of $P_i$.
Thus each class can be written as
\[
P_i=\{(\ell_i,w): w\in W_i\},
\qquad |W_i|=M_i\,,
\]
where $W_i$ is the set of suffixes of the codewords in the subcode $P_i$.

Now, let $(X,Y)$ be a random ordered pair of suffixes obtained by choosing uniformly
an ordered pair of codewords inside the \emph{same} prefix class, and then taking
their suffixes. Then $(X,Y)$ is uniform over $\sum_{i=1}^r M_i^2$ possibilities. Hence
\[
\Ent(X,Y) = \log \sum_{i=1}^r M_i^2.
\]

\begin{lemma}
\label{lem:inj}
If $C_n$ is a $B_2$ code, then the map $\Phi(w_1,w_2)=w_1-w_2$ is injective on
ordered pairs of distinct suffixes (taken within a prefix class).
In other words, the multiset of differences
\[
\{\,w_1-w_2 \mid (w_1,w_2)\in W_i\times W_i,\ w_1\neq w_2,\ i=1,\dots,r\,\}
\]
contains no repetitions.
\end{lemma}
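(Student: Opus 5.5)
The plan is to argue by contradiction and transfer any repeated difference back to a coincidence of sums of full codewords, which the $B_2$ property forbids. Suppose two ordered pairs of distinct suffixes $(w_1,w_2)\in W_i\times W_i$ and $(w_3,w_4)\in W_j\times W_j$, with $w_1\neq w_2$ and $w_3\neq w_4$, give the same difference: $w_1-w_2=w_3-w_4$ over the integers. We must show that the two pairs coincide, i.e.\ $i=j$, $w_1=w_3$ and $w_2=w_4$. (When $i\neq j$ the suffix sets may overlap, so "the same pair" has to include the class index.)

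Rearranging gives $w_1+w_4=w_3+w_2$. Pad each side with prefixes so that both sides become sums of two full codewords. Take the codewords $c_1=(\ell_i,w_1)$ and $c_2=(\ell_i,w_2)$ in $P_i$, and $c_3=(\ell_j,w_3)$ and $c_4=(\ell_j,w_4)$ in $P_j$. Then $c_1+c_4=(\ell_i+\ell_j,\,w_1+w_4)$ and $c_3+c_2=(\ell_j+\ell_i,\,w_3+w_2)$. The prefix parts agree trivially and the suffix parts agree by assumption, so $c_1+c_4=c_3+c_2$.

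By the $B_2$ property, distinct unordered pairs of codewords (repetition allowed) have distinct sums. Hence $\{c_1,c_4\}=\{c_3,c_2\}$ as multisets, which leaves two cases.
\begin{itemize}
\item[(a)] $c_1=c_3$ and $c_4=c_2$. Then $\ell_i=\ell_j$, so $i=j$ because distinct classes have distinct prefixes. It follows that $w_1=w_3$ and $w_2=w_4$, so the two ordered pairs are identical, as desired.
\item[(b)] $c_1=c_2$ and $c_3=c_4$. Then $w_1=w_2$, contradicting the assumption that the suffixes are distinct.
\end{itemize}
So only case (a) survives, and $\Phi$ is injective on the stated set, across all classes at once. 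This also justifies the multiset formulation in the statement.

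The only delicate point is the bookkeeping in this last step. Distinctness of the suffixes is exactly what rules out the degenerate matching in case (b). Including the class index is what makes "no repetitions" meaningful when different classes share suffixes; it is resolved by noting that different classes have different prefixes. No earlier result of the paper is needed; the argument uses only the definition of a $B_2$ code.
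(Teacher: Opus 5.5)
Your proof is correct and follows the same route as the paper. You rewrite $w_1-w_2=w_3-w_4$ as $w_1+w_4=w_2+w_3$, prepend the prefixes to obtain equal sums of two full codewords, and invoke the $B_2$ property; your single multiset comparison $\{c_1,c_4\}=\{c_3,c_2\}$ replaces the paper's case split ($h=m$ with overlapping or disjoint suffixes, versus $h\neq m$) and makes explicit that the same suffix pair cannot occur in two different classes.
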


\begin{IEEEproof}
We provide a short self-contained proof in the same spirit as in
\cite{DellaFioreDalai}. Assume, for the sake of contradiction, that there exist
four \emph{distinct} codewords
\[
(\ell_h,w_1),(\ell_h,w_2)\in P_h,
\qquad
(\ell_m,w_3),(\ell_m,w_4)\in P_m,
\]
such that
\begin{equation}
\Phi(w_1,w_2)=\Phi(w_3,w_4),
\label{eq:Phi-eq}
\end{equation}
either with $h=m$ and $(w_1,w_2)\neq (w_3,w_4)$, or with $h\neq m$.

\smallskip\noindent
\emph{Case 1: $h=m$.}
If $\{w_1,w_2\}\cap\{w_3,w_4\}\neq\emptyset$, then \eqref{eq:Phi-eq} either forces
$(w_1,w_2)=(w_3,w_4)$ or contradicts the $B_2$ property of the code $C_n$.

Otherwise $\{w_1,w_2\}\cap\{w_3,w_4\}=\emptyset$.
From \eqref{eq:Phi-eq} we have $w_1-w_2=w_3-w_4$, hence $w_1+w_4=w_2+w_3$.
Therefore
\begin{align}
(\ell_h,w_1)+(\ell_h,w_4)
&=(2\ell_h,\,w_1+w_4)\nonumber\\
&=(2\ell_h,\,w_2+w_3)\nonumber\\
&=(\ell_h,w_2)+(\ell_h,w_3).
\label{eq:sum-eq-sameclass}
\end{align}
The two unordered pairs of codewords
$\{(\ell_h,w_1),(\ell_h,w_4)\}$ and $\{(\ell_h,w_2),(\ell_h,w_3)\}$
are distinct (the four codewords are distinct), yet \eqref{eq:sum-eq-sameclass}
shows they have the same sum, contradicting the $B_2$ property.

\smallskip\noindent
\emph{Case 2: $h\neq m$.}
From \eqref{eq:Phi-eq} we again obtain $w_1+w_4=w_2+w_3$, hence
\begin{align*}
(\ell_h,w_1)+(\ell_m,w_4)
&=(\ell_h+\ell_m,\,w_1+w_4)\\
&=(\ell_h+\ell_m,\,w_2+w_3) \\
&=(\ell_h,w_2)+(\ell_m,w_3).
\end{align*}
Thus two distinct unordered pairs of codewords have the same sum, again
contradicting the $B_2$ property.
In both cases we reach a contradiction.
\end{IEEEproof}

Let $Z:=X-Y\in\mathbb{Z}^f$ denote the componentwise difference of the two
suffixes (over the integers), i.e., $Z=(Z_1,\dots,Z_f)$ with $Z_j=X_j-Y_j$.
From Lemma~\ref{lem:inj} we obtain the entropy decomposition
\begin{equation*}
\Ent(X,Y)=\Ent(Z)+\mathbb{P}(Z=0)\,\Ent(X,Y\mid Z=0).
\end{equation*}

\begin{lemma}
\label{lem:zero-term}
Let $Z=X-Y$ be the suffix difference defined above. Then
\begin{equation*}
\mathbb{P}(Z=0)\,\Ent(X,Y\mid Z=0)\le \frac{r}{M}\log M.
\end{equation*}
In particular, with the choice
\begin{equation}
e=\Big\lfloor \log_q(2M)-\log_q\log M\Big\rfloor,
\qquad r=q^e,
\label{eq:choice-e}
\end{equation}
we have that $\mathbb{P}(Z=0)\,\Ent(X,Y\mid Z=0)\le 2$.
\end{lemma}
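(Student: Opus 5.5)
The plan is to compute both factors exactly, since $(X,Y)$ is uniform on $\bigcup_i W_i\times W_i$. The event $Z=0$ means $X=Y$. Within a prefix class, suffixes are distinct: two codewords with the same prefix and the same suffix coincide. Hence $Z=0$ occurs exactly on the diagonal pairs, and there are $\sum_i M_i = M$ of them. Therefore
\[
\mathbb{P}(Z=0)=\frac{M}{\sum_{i=1}^r M_i^2}.
\]
Conditioned on $Z=0$, the pair $(X,Y)$ is uniform over these $M$ diagonal pairs, so $\Ent(X,Y\mid Z=0)=\log M$.

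Next I would lower bound the denominator by Cauchy--Schwarz. Since there are $r$ classes with $\sum_i M_i=M$, we have $\sum_i M_i^2\ge M^2/r$. Combining this with the previous step gives
\[
\mathbb{P}(Z=0)\,\Ent(X,Y\mid Z=0)\le \frac{M}{M^2/r}\log M=\frac{r}{M}\log M,
\]
which is the first claim. The number of prefix classes is at most $q^e$, so $r\le q^e$ in any case. For the bound it is harmless to take $r=q^e$, counting empty classes with $M_i=0$, because Cauchy--Schwarz still applies.

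For the second claim I plug in the choice \eqref{eq:choice-e}. Since $e\le \log_q(2M)-\log_q\log M$, we get $q^e\le 2M/\log M$. Hence $\frac{r}{M}\log M\le 2$.

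The only point needing care is the claim that $Z=0$ forces the pair to be diagonal within a class. This uses only that the codewords of $C_n$ are distinct (the $B_2$ property is not needed here), together with the fact that pairs are drawn within the same prefix class. Everything else is a direct computation, so I expect no real obstacle.
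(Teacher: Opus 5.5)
Your proof is correct and follows the paper's argument step for step: you compute $\mathbb{P}(Z=0)=M/\sum_i M_i^2$, use conditional uniformity on the $M$ diagonal pairs to get $\log M$, apply Cauchy--Schwarz in the form $\sum_i M_i^2\ge M^2/r$ (your handling of empty classes when $r=q^e$ is a welcome extra), and use $q^e\le 2M/\log M$ from the floor. One small refinement applies equally to the paper: the same suffix can occur in several prefix classes, so the diagonal suffix pairs $(w,w)$ need not be distinct values of $(X,Y)$, and strictly one only has $\Ent(X,Y\mid Z=0)\le\log M$ (equality holds if you also record the class index), which is all the stated inequality requires.
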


\begin{IEEEproof}
Let $S_2:=\sum_{i=1}^r |P_i|^2$. By construction, $(X,Y)$ is uniform over the
$S_2$ ordered pairs of suffixes drawn within the same prefix class, hence each
such ordered pair has probability $1/S_2$.

The event $Z=0$ is equivalent to $X=Y$, which corresponds exactly to the
$M=\sum_{i=1}^r |P_i|$ diagonal ordered pairs (one per codeword). Therefore
\begin{equation}
\mathbb{P}(Z=0)=\mathbb{P}(X=Y)=\frac{M}{S_2}.
\label{eq:PZ0}
\end{equation}
By Cauchy--Schwarz,
\begin{equation*}
S_2=\sum_{i=1}^r |P_i|^2
\ge \frac{\big(\sum_{i=1}^r |P_i|\big)^2}{r}
=\frac{M^2}{r},
\end{equation*}
hence from \eqref{eq:PZ0} we obtain $\mathbb{P}(Z=0)\le r/M$.

Moreover, since all diagonal pairs have the same probability $1/S_2$,
the conditional distribution of $(X,Y)$ given $Z=0$ is uniform over its support,
which has size $M$. Thus
\begin{equation*}
\Ent(X,Y\mid Z=0)=\log M.
\end{equation*}

Finally, with \eqref{eq:choice-e} we have $r=q^e \le 2M/\log M$, and therefore
\[
\mathbb{P}(Z=0)\,\Ent(X,Y\mid Z=0)
\le \frac{r}{M}\log M \le 2\,.
\]
\end{IEEEproof}

Write $Z=(Z_1,\dots,Z_f)$. By subadditivity,
\[
\Ent(Z)\le \sum_{j=1}^f \Ent(Z_j).
\]
Fix $j\in\{1,\dots,f\}$. Let $r^{(j)}$ be the pmf of $Z_j=X_j-Y_j$. Let $I\in\{1,\dots,r\}$ denote the random index of the prefix class from which
the ordered pair of codewords is drawn.
Conditioned on the prefix class index $I=i$, the pair $(X_j,Y_j)$ is i.i.d.\
on $[0,q-1]$, hence the conditional difference pmf $r^{(j)}_{|I=i}$ is feasible
for \eqref{eq:fourier-problem}. Since the feasible set of
\eqref{eq:fourier-problem} is convex and $r^{(j)}=\sum_i \mathbb P(I=i)\,
r^{(j)}_{|I=i}$ is a convex combination of these conditional pmfs, it follows
that $r^{(j)}$ is feasible for \eqref{eq:fourier-problem} as well. Therefore
\[
\Ent(Z_j)\le h_q^{\mathrm F},
\]
and consequently
\begin{equation*}
\Ent(Z)\le \sum_{j=1}^f \Ent(Z_j)\le f\,h_q^{\mathrm F}.
\end{equation*}

We now fix $e$ as in \eqref{eq:choice-e}. Then Lemma~\ref{lem:zero-term} gives
$\mathbb{P}(Z=0)\Ent(X,Y\mid Z=0)\le 2$, and hence
\begin{equation}
\log \sum_{i=1}^r |P_i|^2
= \Ent(X,Y) \le f\,h_q^{\mathrm{F}} + 2.
\label{eq:sumPi-bound}
\end{equation}

On the other hand, the Cauchy--Schwarz inequality yields
\begin{equation}
\frac{M^2}{r} \le \sum_{i=1}^r |P_i|^2.
\label{eq:CS}
\end{equation}
Using~\eqref{eq:CS} and~\eqref{eq:sumPi-bound} we obtain
\[
2\log M - \log r \le f\,h_q^{\mathrm{F}} + 2.
\]

With the choice \eqref{eq:choice-e} we have
\begin{equation*}
\log r = e\log q = \log(2M)-\log\log M + O(1),
\end{equation*}
hence
\begin{equation}
2\log M-\log r = \log M+\log\log M + O(1).
\label{eq:LHS-asymp}
\end{equation}
Moreover, \eqref{eq:choice-e} also gives
\begin{align}
e&=\frac{\log M}{\log q}+O(\log\log M), \nonumber
\\
f&=n-e = n-\frac{\log M}{\log q}+O(\log\log M).
\label{eq:f-asymp}
\end{align}
Substituting \eqref{eq:LHS-asymp}--\eqref{eq:f-asymp} into
$2\log M-\log r \le f\,h_q^{\mathrm F}+2$ and absorbing constants yields
\begin{equation*}
\Big(1+\frac{h_q^{\mathrm F}}{\log q}\Big)\log M
\le n\,h_q^{\mathrm F}+O(\log\log M).
\end{equation*}
Dividing by $n$ and using $\log\log M=o(n)$ (since $M\le q^n$) we obtain
\begin{equation*}
\frac{1}{n}\log M
\le \frac{h_q^{\mathrm F}\log q}{\log q+h_q^{\mathrm F}}+o(1).
\end{equation*}

Therefore we get the following theorem.

\begin{theorem}
\label{thm:main}
For integer $q\ge 2$, let $R_b$ be the asymptotic rate of $q$-ary $B_2$ codes. Then
\begin{equation*}
R_b \le
\frac{h_q^{\mathrm{F}}}{\log q + h_q^{\mathrm{F}}},
\end{equation*}
where $h_q^{\mathrm{F}}$ is the optimal value of the Fourier-analytic
convex program~\eqref{eq:fourier-problem}, equivalently of the Toeplitz
semidefinite program~\eqref{eq:toeplitz-sdp}.
\end{theorem}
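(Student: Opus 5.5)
The plan is to collect the chain of estimates derived just before the statement and then convert the base of the logarithm, since \(R_b\) is measured in base-\(q\) logarithms. Fix a family of \(q\)-ary \(B_2\) codes \(C_n\) with \(M=|C_n|\). We may assume \(M\to\infty\); otherwise the rate is \(0\) and there is nothing to prove. Choose \(e\) as in \eqref{eq:choice-e} and split every codeword into a prefix of length \(e\) and a suffix of length \(f=n-e\). Let \((X,Y)\) be a uniformly random ordered pair of suffixes drawn from the same prefix class, so that \(\Ent(X,Y)=\log\sum_i M_i^2\).

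The first step uses Lemma~\ref{lem:inj}. On the event \(Z\neq 0\), the difference \(Z=X-Y\) determines \((X,Y)\). This gives
\[
\Ent(X,Y)=\Ent(Z)+\mathbb{P}(Z=0)\,\Ent(X,Y\mid Z=0).
\]
The second term is at most \(2\) by Lemma~\ref{lem:zero-term}.

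The first term is handled by subadditivity: \(\Ent(Z)\le \sum_{j=1}^f \Ent(Z_j)\). The key point is that each coordinate marginal \(r^{(j)}\) is feasible for \eqref{eq:fourier-problem}. Conditionally on the class index \(I=i\), the pair \((X_j,Y_j)\) is i.i.d., because the pair is drawn uniformly from \(W_i\times W_i\). Hence, by the proof of Lemma~\ref{lem:hqF-upper}, the conditional pmf is feasible. The feasible set is convex: the nonnegativity of \(R(\theta)\) and the constraint \(r_0\ge 1/q\) are both preserved under mixtures. So the mixture \(r^{(j)}\) is feasible too, which gives \(\Ent(Z_j)\le h_q^{\mathrm F}\). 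Combining this with Cauchy--Schwarz, \(M^2/r\le \sum_i M_i^2\), we obtain
\[
2\log M-\log r\le f\,h_q^{\mathrm F}+2 .
\]

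Next substitute the parameters: \(\log r=\log(2M)-\log\log M+O(1)\) and \(f=n-\log M/\log q+O(\log\log M)\). This yields
\[
\Big(1+\frac{h_q^{\mathrm F}}{\log q}\Big)\log M\le n\,h_q^{\mathrm F}+O(\log\log M).
\]
Since \(\log\log M=O(\log n)=o(n)\), dividing by \(n\) gives
\[
\tfrac1n\log M\le \frac{h_q^{\mathrm F}\log q}{\log q+h_q^{\mathrm F}}+o(1).
\]
Dividing by \(\log q\) converts this to \(\tfrac1n\log_q M\), and taking \(\limsup\) yields the stated bound. The equivalence with \eqref{eq:toeplitz-sdp} is exactly Proposition~\ref{prop:hqF-hqSDP}.

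The step needing most care is the feasibility of the marginal difference pmf. The i.i.d.\ structure holds only conditionally on the class index, so the Fourier constraint must be shown to survive averaging over classes. Convexity handles this, because \(R(\theta)\) is linear in \(r\). A second, minor point is the bookkeeping of \(e\). If \(e\) exceeds \(n\), the suffix length \(f\) would be negative; but then \(M\gtrsim q^n\log M\), which is impossible for large \(n\), so \(f\ge 0\) eventually. The \(O(\log\log M)\) error terms also need to be uniform, and they are, since all constants depend only on \(q\).
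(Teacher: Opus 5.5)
Your proposal is correct and is essentially the paper's own proof: the prefix--suffix split with $e$ as in \eqref{eq:choice-e}, injectivity of suffix differences, the bound $2$ on the zero-difference term, and convexity of the feasible set of \eqref{eq:fourier-problem} to pass from the class-conditional i.i.d.\ pmfs to the mixture $r^{(j)}$, followed by Cauchy--Schwarz and the final division by $\log q$; your check that $f\ge 0$ eventually is a detail the paper leaves implicit. One small imprecision, inherited from the paper, is that $\Ent(X,Y)=\log\sum_i M_i^2$ holds exactly only if $(X,Y)$ is taken to be the ordered pair of \emph{codewords} (equivalently, the suffixes together with the class index $I$): the $B_2$ property does not prevent a suffix from occurring in several prefix classes, in which case diagonal suffix pairs collide, but under the codeword reading $Z$ is still a function of the pair, determines it on $\{Z\neq 0\}$ by Lemma~\ref{lem:inj}, and $\Ent(\cdot\mid Z=0)=\log M$ exactly, so the rest of your argument is unchanged.
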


\begin{table}[t]
\vspace{0.05in}
\centering
\caption{Numerical comparison of rate upper bounds for $q$-ary $B_2$ codes.
All numbers are rounded upwards.}
\label{tab:numerics}
\setlength{\tabcolsep}{3.5pt}
\renewcommand{\arraystretch}{1.25}
\begin{tabular}{@{}lccccc@{}}
\toprule
 & $q=9$ & $10$ & $11$ & $12$ & $13$ \\
\midrule
$R_q^{\mathrm{new}}$           & \textbf{0.55792} & \textbf{0.55611} & \textbf{0.55457} & \textbf{0.55323} & \textbf{0.55206} \\
Lindstr\"om~\cite{Lindstrom}       & 0.57551 & 0.56839 & 0.56264 & 0.55789 & 0.55390 \\
Ours~\cite{DellaFioreDalai}        & 0.56149 & 0.55966 & 0.55807 & 0.55668 & 0.55545 \\
Wang~\cite{Wang}                   & 0.55841 & 0.55727 & 0.55626 & 0.55536 & 0.55454 \\
\bottomrule
\end{tabular}
\end{table}

\section{Numerical evaluation and comparison with known bounds}
\label{sec:numerics}

We evaluate the bound by solving a truncated Toeplitz SDP, imposing
\(T^{(N)}(r)\succeq 0\) only for \(N\le N_{\max}\). This relaxes the infinite
constraint set and therefore yields a certified upper bound:
\[
h_q^\star \le h_q^{\mathrm F}=h_q^{\mathrm{SDP}}\le h_{q,N_{\max}}.
\]

Using Theorem~\ref{thm:main}, we obtain the explicit rate upper bound
\[
R_b\le R_q^{\mathrm{new}}:=\frac{h_{q,N_{\max}}}{\log q+h_{q,N_{\max}}}.
\]
Table~\ref{tab:numerics} reports values for $N_{\max}=100$ and compares them with
previous upper bounds from Lindstr\"om \cite{Lindstrom}, Wang \cite{Wang}, and our earlier bound \cite{DellaFioreDalai}.

\section{A Cohen--Litsyn upper bound for constant-weight \(B_2\) codes}

Let \(C^{(w)} \subseteq \{0,1\}^n\) be a binary \(B_2\) code of constant weight
\[
w=\alpha n(1+o(1)), \qquad 0<\alpha<1,
\]
and rate
\[
R=\frac{1}{n}\log |C^{(w)}|.
\]
Since all codewords have the same weight, only even distances occur. Let
\[
B^{(w)}_{2i}
=
\frac{1}{|C^{(w)}|}
\bigl|\{(x,y)\in C^{(w)}\times C^{(w)}: d(x,y)=2i\}\bigr|
\]
denote the distance distribution. We write
\[
h(x)=-x\log x-(1-x)\log(1-x)
\]
for the binary entropy function.

As in Cohen et al. \cite{CohenLitsynZemor} and as in the previous section, the \(B_2\) property implies that the map
\[
(x,y)\mapsto x-y
\]
is injective on ordered pairs of distinct codewords. If \(d(x,y)=2i\), then
\(x-y\in\{-1,0,1\}^n\) has exactly \(i\) entries equal to \(+1\), exactly \(i\)
entries equal to \(-1\), and \(n-2i\) zero entries. Hence the number of such
difference vectors is
\[
\binom{n}{i}\binom{n-i}{i}
=
\binom{n}{2i}\binom{2i}{i}.
\]
Therefore
\[
|C^{(w)}|\,B^{(w)}_{2i}\le \binom{n}{i}\binom{n-i}{i},
\]
that is,
\begin{equation*}
B^{(w)}_{2i}\le \frac{\binom{n}{i}\binom{n-i}{i}}{|C^{(w)}|}.
\end{equation*}
Setting \(i=\xi n (1+o(1))\) we get
\begin{equation}\label{eq:cw-B2-upper-entropic}
\frac{1}{n}\log B^{(w)}_{2\xi n}
\le h(2\xi)+2\xi-R+o(1).
\end{equation}

\begin{figure}[t]
    \centering
    \includegraphics[height=250pt]{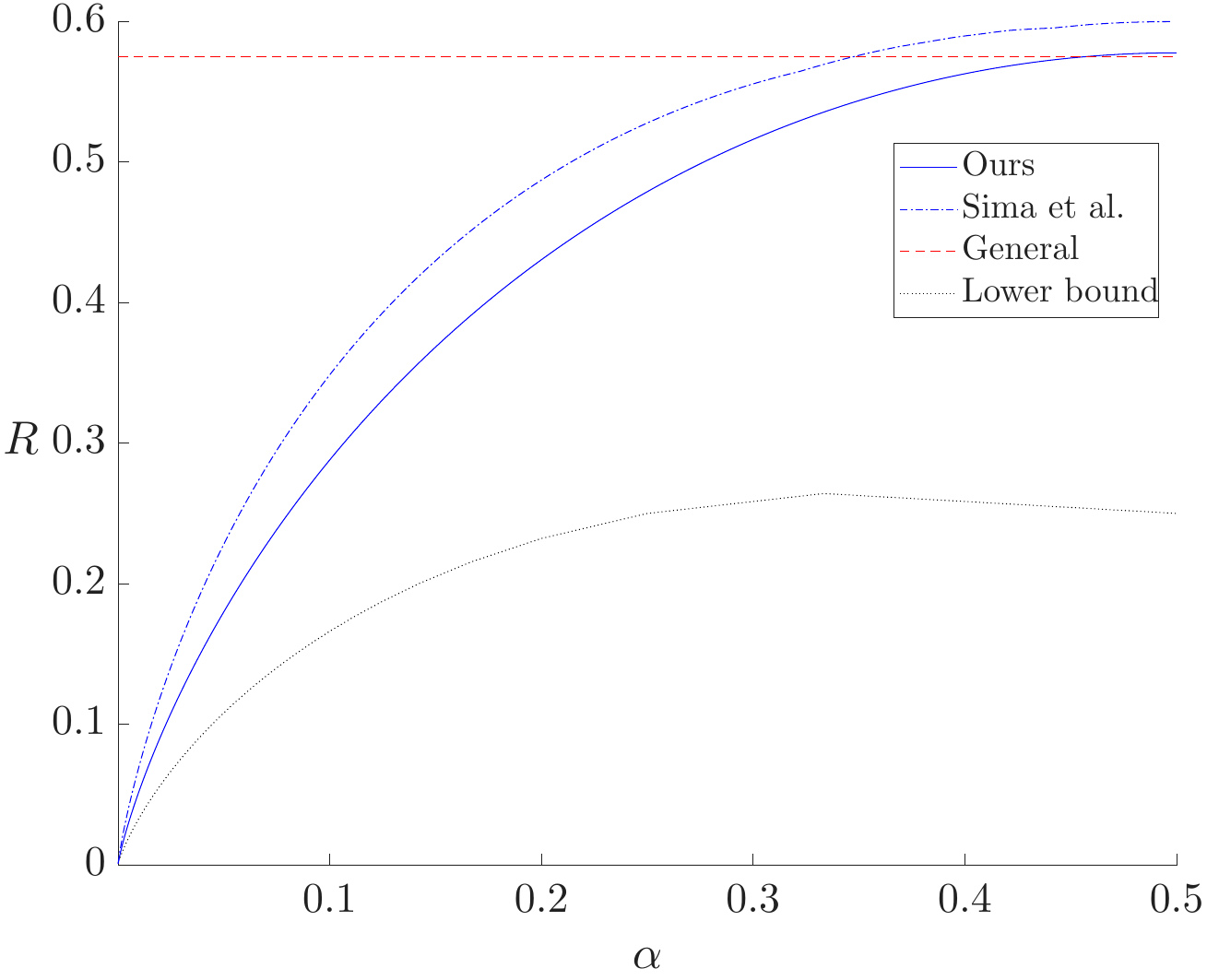}
    \caption{Comparison of upper bounds on the asymptotic rate of binary
    constant-weight \(B_2\) codes. The curve labeled ``Ours'' corresponds to
    Theorem~\ref{thm:cw-B2-LP}, while ``Sima et al.'' denotes the bound
    \(R_S(\alpha)\) from~\cite{SimaLiShomoronyMilenkovic}.
    The upper bound for general codes from~\cite{CohenLitsynZemor} and the constructive lower bound from~\cite{SimaLiShomoronyMilenkovic} are also shown for reference.}
    \label{fig:cw-b2-comparison}
\end{figure}

Let \(\delta_{\mathrm{LP}}(R,\alpha)\) denote the constant-weight linear programming
upper bound on the relative minimum distance, namely
\[
\delta_{\mathrm{LP}}(R,\alpha)
=
\min_{\substack{0\le \beta\le \alpha\\ h(\beta)=R}}
2\,\frac{\alpha(1-\alpha)-\beta(1-\beta)}
{1+2\sqrt{\beta(1-\beta)}}.
\]
By the argument leading to equation (29) of Lemma~3 of \cite{Litsyn99}, one can show there exists an index
\(i=\xi n (1 + o(1))\) with
\[
2\xi\le \delta_{\mathrm{LP}}(R,\alpha)
\]
such that
\begin{equation*}
B^{(w)}_{2\xi n}
\ge
|C^{(w)}|\,
\frac{\binom{w}{\xi n}\binom{n-w}{\xi n}}{\binom{n}{w}}\,
k_{n,w,\xi},
\end{equation*}
where \(k_{n,w,\xi}\) is the polynomial prefactor arising from Litsyn's estimate, in particular, \(k_{n,w,\xi}=n^{-O(1)}\), so it is negligible on the exponential scale. Taking logarithms, we obtain
\begin{equation}\label{eq:litsyn-cw-entropic}
\frac{1}{n}\log B^{(w)}_{2\xi n}
\ge
R-h(\alpha)
+\alpha h\!\left(\frac{\xi}{\alpha}\right)
+(1-\alpha)h\!\left(\frac{\xi}{1-\alpha}\right)
+o(1).
\end{equation}

Combining \eqref{eq:cw-B2-upper-entropic} and \eqref{eq:litsyn-cw-entropic}, we find
\[
2R
\le
h(2\xi)+2\xi+h(\alpha)
-\alpha h\!\left(\frac{\xi}{\alpha}\right)
-(1-\alpha)h\!\left(\frac{\xi}{1-\alpha}\right)
+o(1).
\]
Define
\[
\Psi_\alpha(\xi)
:=
\frac12\Bigl[
h(2\xi)+2\xi+h(\alpha)
-\alpha h\!\left(\frac{\xi}{\alpha}\right)
-(1-\alpha)h\!\left(\frac{\xi}{1-\alpha}\right)
\Bigr].
\]
Then
\begin{equation}\label{eq:implicit-cw-B2}
R\le \Psi_\alpha(\xi)+o(1)
\end{equation}
for some \(\xi\) satisfying \(2\xi\le \delta_{\mathrm{LP}}(R,\alpha)\).

Moreover, \(\Psi_\alpha\) is increasing on \(0\le \xi\le \min\{\alpha,1-\alpha\}\). Indeed,
\[
\Psi_\alpha'(\xi)
=
\log\!\left(
\frac{1-2\xi}{\sqrt{(\alpha-\xi)(1-\alpha-\xi)}}
\right),
\]
and by the AM-GM inequality we have that
\[
\sqrt{(\alpha-\xi)(1-\alpha-\xi)}
\le
\frac{(\alpha-\xi)+(1-\alpha-\xi)}{2}
=
\frac{1-2\xi}{2},
\]
so \(\Psi_\alpha'(\xi)\ge 0\). Since \(2\xi\le \delta_{\mathrm{LP}}(R,\alpha)\), eq. \eqref{eq:implicit-cw-B2}
implies
\begin{equation*}
R\le \Psi_\alpha\!\left(\frac{\delta_{\mathrm{LP}}(R,\alpha)}{2}\right).
\end{equation*}

Taking a subsequence along which the rates converge to
\(r_{B_2}^{\mathrm{cw}}(\alpha)\), and using the continuity of
\(\Psi_\alpha\) and \(\delta_{\mathrm{LP}}(R,\alpha)\), we obtain the following.

\begin{theorem}\label{thm:cw-B2-LP}
For every \(\alpha\in(0,1)\), the asymptotic rate \(r_{B_2}^{\mathrm{cw}}(\alpha)\) of
binary constant-weight \(B_2\) codes of relative weight \(\alpha\) satisfies
\[
r_{B_2}^{\mathrm{cw}}(\alpha)\le \rho(\alpha),
\]
where \(\rho(\alpha)\) is the largest real solution in $R$ of
\[
R=\Psi_\alpha\!\left(\frac{\delta_{\mathrm{LP}}(R,\alpha)}{2}\right).
\]
\end{theorem}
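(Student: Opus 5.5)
The plan is to turn the pointwise inequality $R\le \Psi_\alpha(\delta_{\mathrm{LP}}(R,\alpha)/2)+o(1)$, derived above for each code of length $n$, into the stated fixed-point bound. The route is a limiting argument followed by a monotonicity/continuity argument on the function
\[
G_\alpha(R):=\Psi_\alpha\!\left(\frac{\delta_{\mathrm{LP}}(R,\alpha)}{2}\right)-R .
\]

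First, I fix a sequence of constant-weight $B_2$ codes $C_n^{(w)}$ with $w=\alpha n(1+o(1))$ and rates $R_n$. I pass to a subsequence with $R_n\to r:=r_{B_2}^{\mathrm{cw}}(\alpha)$. For each $n$, \eqref{eq:cw-B2-upper-entropic} and \eqref{eq:litsyn-cw-entropic} give an index $\xi_n$ with $2\xi_n\le\delta_{\mathrm{LP}}(R_n,\alpha)+o(1)$ and $R_n\le\Psi_\alpha(\xi_n)+o(1)$.

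Since $\Psi_\alpha$ is nondecreasing on $[0,\min\{\alpha,1-\alpha\}]$, the second inequality can be replaced by $R_n\le\Psi_\alpha(\delta_{\mathrm{LP}}(R_n,\alpha)/2)+o(1)$. This needs $\delta_{\mathrm{LP}}/2\le\min\{\alpha,1-\alpha\}$, which I would check from the formula: at $\beta=0$ one gets $\delta_{\mathrm{LP}}\le 2\alpha(1-\alpha)$, and $\alpha(1-\alpha)\le\min\{\alpha,1-\alpha\}$. I then let $n\to\infty$ and use continuity of $R\mapsto\delta_{\mathrm{LP}}(R,\alpha)$ and of $\Psi_\alpha$ to conclude $G_\alpha(r)\ge 0$.

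Continuity of $\delta_{\mathrm{LP}}$ in $R$ holds on the range where it is defined, since $\beta=h^{-1}(R)$ restricted to $[0,\alpha]$ is continuous. That range is $R\le h(\alpha)$, which covers every rate because $R_n\le \frac1n\log\binom{n}{w}\to h(\alpha)$. Both sides of the equation are then continuous on the compact interval $[0,h(\alpha)]$.

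To finish, let $\rho(\alpha)$ be the largest zero of $G_\alpha$ on $[0,h(\alpha)]$, and suppose $r>\rho(\alpha)$. Then $G_\alpha$ has no zero on $(\rho(\alpha),h(\alpha)]$, so by the intermediate value theorem it has constant sign there. Since $G_\alpha(r)\ge0$ and $r$ is not a zero, $G_\alpha>0$ on the whole interval, in particular $G_\alpha(h(\alpha))>0$.

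At $R=h(\alpha)$ the only admissible $\beta$ is $\alpha$, so $\delta_{\mathrm{LP}}(h(\alpha),\alpha)=0$. Hence
\[
G_\alpha(h(\alpha))=\Psi_\alpha(0)-h(\alpha)=\tfrac12 h(\alpha)-h(\alpha)<0,
\]
a contradiction. Therefore $r\le\rho(\alpha)$. The same computation shows $G_\alpha$ changes sign, so a zero exists and $\rho(\alpha)$ is well defined.

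I expect the main obstacle to be the endpoint analysis and continuity of $\delta_{\mathrm{LP}}(\cdot,\alpha)$. The set of $\beta\in[0,\alpha]$ with $h(\beta)=R$ is a single point $h^{-1}(R)$ only when $\alpha\le 1/2$. For $\alpha>1/2$ it can contain two points, and the minimum over them must be shown continuous. I would first reduce to $\beta\le 1/2$, noting that the objective depends on $\beta$ only through $\beta(1-\beta)$, which is symmetric under $\beta\mapsto1-\beta$. The $o(1)$ slack in $2\xi_n\le\delta_{\mathrm{LP}}$ coming from Litsyn's lemma also has to be absorbed using uniform continuity of $\Psi_\alpha$.
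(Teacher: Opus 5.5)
Your proposal is correct and follows the paper's route. You combine the $B_2$ upper estimate on the distance distribution with Litsyn's lower estimate to get $R\le\Psi_\alpha(\xi)+o(1)$ for some $\xi$ with $2\xi\le\delta_{\mathrm{LP}}(R,\alpha)$. You then invoke monotonicity of $\Psi_\alpha$ and pass to the limit along a subsequence using continuity. Your intermediate-value argument with $G_\alpha(h(\alpha))=-\tfrac12 h(\alpha)<0$, together with the symmetry remark covering $\alpha>1/2$, spells out the step from $G_\alpha(r)\ge 0$ to $r\le\rho(\alpha)$, which the paper leaves implicit.
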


\subsection{Comparison with the best previously known bound}

The best previously known asymptotic upper bound for binary constant-weight
\(B_2\) codes is due to Sima et al. \cite{SimaLiShomoronyMilenkovic}.
In the notation of the present paper, their bound may be written as
\[
r_{B_2}^{\mathrm{cw}}(\alpha)\le R_S(\alpha),
\]
where
\begin{align*}
&R_S(\alpha)
:=
\min_{0\le e \le 1}\;
\max_{\alpha'\in[\max\{0,\alpha-1+e\},\,\min\{e,\alpha\}]} \max\left\{
e h\!\left(\frac{\alpha'}{e}\right),\,
\frac12\left[
e h\!\left(\frac{\alpha'}{e}\right)
+
(1-e) H(p_0,p_1,p_{-1})
\right]
\right\},
\end{align*}
with
\[
p_0=\frac{(\alpha - \alpha')^2+(1-e-\alpha+\alpha')^2}{(1-e)^2},
\quad
p_1=p_{-1}=\frac{1-p_0}{2},
\]
and
\[
H(p_0,p_1,p_{-1})
=
-p_0\log p_0-p_1\log p_1-p_{-1}\log p_{-1}.
\]

Figure~\ref{fig:cw-b2-comparison} compares \(R_S(\alpha)\) with the
upper bound obtained in Theorem~\ref{thm:cw-B2-LP}. The comparison shows that
our bound is uniformly smaller for all \(\alpha\in(0,1/2]\). Hence, Theorem~\ref{thm:cw-B2-LP}
strictly improves the best previously known upper bound throughout the full constant-weight regime under consideration. We also observe that for any $w \leq 0.459n$ we improve the bound for general $B_2$ codes of $0.57525$ obtained by Cohen et al. \cite{CohenLitsynZemor}.

\end{document}